\documentclass{article}

\usepackage{amssymb,latexsym,amsmath,amsthm,hyperref}

\usepackage[margin=1.4in]{geometry}
\usepackage{graphicx}
\usepackage{parskip}

\usepackage{mathtools} 
\mathtoolsset{showonlyrefs,showmanualtags}

\makeatletter
\def\thm@space@setup{%
  \thm@preskip=\parskip \thm@postskip=0pt
}
\makeatother

\newtheorem{thm}{Theorem}[section]
\newtheorem{remark}{Remark}
\newtheorem{lemma}[thm]{Lemma}

\newtheorem{p}{Problem}
\newtheorem{q}{Question}
\newtheorem{conj}{Conjecture}

\newcommand{\bep}{\begin{prob}}
\newcommand{\ep}{\end{prob}}
\newcommand{\be}{\begin{equation}}
\newcommand{\ee}{\end{equation}}
\newcommand{\cD}{\mathcal{D}}

\newcommand{\pd}{\partial}

\newcommand{\RR}{\mathbb{R}}
\newcommand{\CC}{\mathbb{C}}

\newcommand{\e}{\varepsilon}

\newcommand{\hF}{\hat{F}}

\usepackage[colorinlistoftodos]{todonotes}

\begin{document}
\author{Nickolas Arustamyan, Christopher Cox, Erik Lundberg, \\ 
Sean Perry, Zvi Rosen 
\vspace{0.1in}
\\
{\it Florida Atlantic University}\\
{\it Department of Mathematical Sciences}\\
{\it 777 Glades Rd} \\
{\it Boca Raton, FL 33431}}
\title{On the Number of Equilibria Balancing Newtonian Point Masses with a Central Force}

\maketitle

\begin{abstract}
We consider the critical points (equilibria) of a planar potential generated by $n$ Newtonian point masses augmented with a quadratic term (such as arises from a centrifugal effect).  Particular cases of this problem have been considered previously in studies of the circular restricted $n$-body problem.  
We show that the number of equilibria is finite for a generic set of parameters, and we establish estimates for the number of equilibria.
We prove that the number of equilibria is bounded below by $n+1$, and we provide examples to show that this lower bound is sharp.  We prove an upper bound on the number of equilibria that grows exponentially in $n$.
In order to establish a lower bound on the maximum number of equilibria, we analyze a class of examples, referred to as ``ring configurations'', consisting of $n-1$ equal masses positioned at vertices of a regular polygon with an additional mass located at the center.  Previous numerical observations indicate that these configurations can produce as many as $5n-5$ equilibria.
We verify analytically that the ring configuration has at least $5n-5$ equilibria when the central mass is sufficiently small.
We conjecture that the maximum number of equilibria grows linearly with the number of point masses.
We also discuss some mathematical similarities to other equilibrium problems in mathematical physics, namely, Maxwell's problem from electrostatics and the image counting problem from gravitational lensing.
\end{abstract}

\section{Introduction}

In this paper we are interested in the following problem.

\begin{p}\label{prob:main}
Study the number of equilibria (critical points) $z = (x,y) \in \RR^2$ that the potential
\be\label{eq:pot}
F(z):=\frac{|z|^2}{2} + \sum_{i=1}^n \frac{m_i}{|z-z_i|} = \frac{x^2+y^2}{2} + \sum_{i=1}^n \frac{m_i}{\sqrt{(x-x_i)^2+(y-y_i)^2}}
\ee
can have, where $z_i \in \RR^2$, and $m_i>0$, in other words study the number of solutions of the system
\be\label{eq:main}
\begin{cases}
\displaystyle x-\sum_{i=1}^n \frac{x-x_i}{((x-x_i)^2+(y-y_i)^2)^{3/2}} =0 \\
\displaystyle
y-\sum_{i=1}^n \frac{y-y_i}{((x-x_i)^2+(y-y_i)^2)^{3/2}} =0 
\end{cases},
\ee
obtained by setting the gradient of \eqref{eq:pot}
to zero.
What is the maximum number $N(n)$ 
of equilibria for each $n$?
\end{p}

One source of motivation for this problem comes from studying the restricted $(n+1)$-body problem, where the dynamics of an object of negligibly small mass are studied amid the Newtonian force field of a \emph{central configuration}, a rigidly rotating configuration of $n$ primary masses (``primaries'') with masses $m_i$.
In a co-rotating frame of reference,
where the positions of the rigidly rotating primaries become stationed at points $z_i$, the centrifugal effect gives rise to an additional quadratic term, and we arrive at (\ref{eq:pot}) 
as the \emph{effective potential} (or ``ammended potential'' \cite{KuRoSm}) after an appropriate rescaling.
(In addition to the centrifugal effect, the dynamical problem must take into account the Coriolis effect, but at points of equilibrium the Coriolis term vanishes.)
Equilibrium solutions in the co-rotating frame of reference play a fundamental role in understanding the dynamical problem \cite{Mireles2}, \cite{Mireles3}, \cite{Mireles4}.
Such equilibria are also important for classifying \emph{central configurations with an inferior mass} \cite{inferior}.

It is only special configurations 
of the primaries that solve the $n$-body problem under rigid rotation, earning the designation of central configurations.
Hence, 
this specific application of Problem \ref{prob:main} to the circular restricted $n$-body problem goes hand in hand with
the difficult question 
of which choices of $z_i$ and $m_i$ 
correspond to valid central configurations---just showing finiteness (up to similarity) of the number of such central configurations would solve Smale's 6th problem for the 21st century \cite{Sm}.
However, it seems natural to consider Problem \ref{prob:main} in the generality with which it is stated, over the full range of the parameters $z_i \in \RR^2$ and $m_i > 0$, both for the sake of mathematical elegance and also because Problem \ref{prob:main} has other physical interpretations valid over the full range of parameter space.
For instance, in an electrostatic setting the negation of the potential \eqref{eq:pot} can represent that of anchored Coulomb point charges augmented by a quadratic confining potential.

Previous work on Problem \ref{prob:main} has been restricted to special cases emerging from the above-mentioned application in celestial mechanics.
In \cite{KuRoSm}, the authors use BKK (Bernstein-Khovanskii-Kushnirenko) theory to prove an upper bound on the number of equilibria for the case of $n=3$ point masses located at the vertices of an equilateral triangle.
A complete classification of the number of equilibria was provided in \cite{BaLe} using rigorous numerical analysis.
In \cite{Kalvouridis1998}, the author performs a numerical investigation of the number of equilibria when the point masses are located at vertices of a regular $(n-1)$-gon along with a point mass located at the center, and when the central mass is sufficiently small in comparison with the peripheral masses the system is conjectured to have $5n-5$ equilibria based on numerical computations.  We verify this analytically in Theorem \ref{thm:5n-5} below.


Problem \ref{prob:main}
is reminiscent of Maxwell's problem \cite{GaNoSh} in electrostatics to study the number of equilibria admitted by a point charge system
(in Maxwell's problem there is no confining potential, the points are generally in three-dimensional space, and the charges are not required to have a common sign).
In his 1891 treatise on electricity and magnetism \cite{Maxwell}, J. Maxwell conjectured a quadratic upper bound of $(n-1)^2$ for the number of equilibria.
In \cite{GaNoSh}, Gabrielov, Novikov, and Shapiro obtained (under a finiteness assumption which was shown to hold generically) upper bounds using Bernstein-Khovanski-Kushnirenko theory and lower bounds using Morse theory.
Their upper bound grows super-exponentially in the number of point masses and thus we are still far from a proof of Maxwell's conjecture.  The authors of \cite{GaNoSh} (see also \cite{Killian})  considered the following restricted planar version of Maxwell's problem that is even closer to Problem \ref{prob:main}.

\begin{p}[Maxwell's problem restricted to the plane]\label{prob:Max}
Study the number of critical points $z = (x,y) \in \RR^2$ of the electrostatic potential
\be\label{eq:Maxwell}
E(z):=\sum_{i=1}^n \frac{m_i}{|z-z_i|},
\ee
where $z_i \in \RR^2$, $m_i \in \RR$.
\end{p}
It is conjectured in \cite{GaNoSh} that the maximum number of equilibria for Problem \ref{prob:Max} is controlled by the complexity of an associated Voronoi complex and grows linearly with $n$.  An exponentially growing upper bound was obtained in \cite{Killian}.  We point out a slight improvement in the Concluding Remarks below.

Problem \ref{prob:main} is also reminiscent of
the following image counting problem \cite{KhNe2}, \cite{Pet10} from gravitational lensing where gravitationally lensed images of a single background source can be viewed as equilibria of a potential, referred to as the \emph{time delay function}. 
\begin{p}[gravitationally lensed images]\label{prob:gravlens}
Study the number of critical points $z = (x,y) \in \RR^2$ of the time delay function
\be\label{eq:gravlens}
T(z)=\frac{|z-w|^2}{2} - \sum_{i=1}^n m_i \log|z-z_i|,
\ee 
associated to a gravitational lens generated by $n$ point masses positioned at $z_i \in \RR^2$ with mass $m_i>0$ with a background source positioned at $w \in \RR^2$.
\end{p}
The key difference here is that Problem \ref{prob:gravlens} involves \emph{logarithmic} instead of Newtonian point mass potentials. This leads to a gradient field $\nabla T (z)$ that is harmonic, and moreover critical points of the time delay can be viewed as fixed points of an anti-holomorphic map.
Settling astronomer S. Rhie's conjecture \cite{Rhie2001}, D. Khavinson and G. Neumann \cite{KhNe} proved a sharp upper bound of $5n-5$ for the number of critical points of the time delay $T$.
Their proof uses a clever indirect method based on anti-holomorphic dynamics that was introduced in \cite{KhSw}.
This completely settles the problem of determining the maximum number of lensed images for single-plane lensing.  See \cite{Bleher}, \cite{SeteCMFT}, \cite{SeteGrav}, \cite{SetePert} for further progress in understanding Problem \ref{prob:gravlens}.

The analogous problem of determining the sharp upper bound for multi-plane lensing is still open and was discussed in \cite{Pet10}.  An upper bound has recently been established in \cite{Perry} by the fourth named author of the current paper.
A sharp lower bound that holds in the multiplane case was proved earlier by A.O. Petters in \cite{Pe2} using Morse theory.

\begin{remark}\label{rmk:ring}
Rhie constructed examples of $n$-point gravitational lenses attaining $5n-5$ images in \cite{Rhie}.  Her examples place the point masses at the same locations as the ``ring'' configurations from \cite{Kalvouridis1998} that we use in our proof of Theorem \ref{thm:5n-5}.
\end{remark}

In this paper, we will focus on Problem \ref{prob:main}.
We prove a generic finiteness result for the number of equilibria (see Theorem \ref{thm:genfin}), a sharp lower bound (see Theorem \ref{thm:LB}), and an exponentially growing upper bound (see Theorem \ref{thm:UB}).  We prove that the maximum number of equilibria is at least $5n-5$ (see Theorem \ref{thm:5n-5}), and
we conjecture that the maximum number of equilibria grows linearly with $n$ (see Concluding Remarks).

\section{Sharp Lower Bound for the Number of Equilibria}

In this section, we use Morse theory to prove a lower bound on the number of equilibria for Problem \ref{prob:main}.  First we establish a preliminary genericity result.



\subsection{Generic Finiteness}

In the case $n=1$, setting $z_1=0$ produces infinitely many equilibria (a circle centered at the origin).
It is an open problem to determine whether this is the only example with infinitely many equilibria.

Here we show, outside a set of parameters having measure zero, finiteness of the number of equilibria for Problem \ref{prob:main}.  The proof is based on Sard's Lemma.  An alternative algebraic proof of finiteness is given in the proof of Theorem \ref{thm:UB} below.

\begin{thm}\label{thm:genfin}
For a generic set of parameters (i.e., outside a set of measure zero), the potential function $F$ defined in \eqref{eq:pot} has finitely many critical points all of which are non-degenerate (i.e., the Hessian determinant is non-vanishing at each critical point).
\end{thm}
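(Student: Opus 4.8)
The plan is to realize the critical points of $F$ as the zero set of the gradient map $G(z) := \nabla F(z) = z - \sum_{i=1}^n m_i \frac{z-z_i}{|z-z_i|^3}$, which is $C^\infty$ on the punctured plane $\RR^2 \setminus \{z_1,\dots,z_n\}$, and to note that a critical point $z$ is non-degenerate precisely when $\det DG(z) = \det \nabla^2 F(z) \neq 0$. The argument then splits into two essentially independent ingredients: a \emph{confinement} estimate, which traps every critical point inside a fixed compact subset of the punctured plane (this is what upgrades ``non-degenerate'' to ``finitely many''), and a \emph{genericity} step based on Sard's lemma, which forces all critical points to be non-degenerate once we discard a null set of parameters.

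For the genericity step I would exploit a global translation. Fix the masses $m = (m_1,\dots,m_n)$ and a shape $\xi = (z_2-z_1,\dots,z_n-z_1)$, and translate the whole configuration by $a \in \RR^2$. The substitution $w = z - a$ converts the critical-point equation into $\tilde G(w) = -a$, where $\tilde G(w) := w - \sum_i m_i (w-z_i)/|w-z_i|^3$ is independent of $a$, and a direct computation shows that the Hessian of the translated potential at such a point equals $D\tilde G(w)$. Hence a critical point is degenerate exactly when $-a$ is a critical value of $\tilde G$. By Sard's lemma the critical values of the smooth map $\tilde G : \RR^2 \setminus \{z_i\} \to \RR^2$ form a set of measure zero, so for almost every translation $a$ every critical point of the translated potential is non-degenerate. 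To promote this slicewise statement to the full parameter space $\RR^{2n}\times \RR_{>0}^n$, I would use $(a,\xi,m)$ as coordinates (a linear change on the position variables, which preserves null sets) and invoke Fubini: the set of parameters admitting a degenerate critical point meets each $(\xi,m)$-fiber in a null set, and is therefore itself null.

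For the confinement step, I would rule out critical points near the singularities and near infinity. Since $|(z-z_i)/|z-z_i|^3| = 1/|z-z_i|^2$, one has $|G(z)| \ge |z| - \sum_i m_i/|z-z_i|^2 \to \infty$ as $|z| \to \infty$, so critical points are bounded; and as $z \to z_i$ the $i$-th Newtonian term has magnitude $m_i/|z-z_i|^2 \to \infty$ while all other terms stay bounded, so $|G(z)| \to \infty$ and a punctured neighborhood of each $z_i$ is free of critical points. Thus the critical set lies in a compact $K \subset \RR^2 \setminus \{z_1,\dots,z_n\}$. If the critical points were infinite, they would accumulate at some $z^\ast \in K$; continuity gives $G(z^\ast)=0$, so $z^\ast$ is itself a critical point that, being a limit of distinct critical points, is non-isolated, contradicting non-degeneracy (non-degenerate critical points are isolated by the inverse function theorem). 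Combining this with the genericity step yields, outside a set of measure zero, finitely many critical points, all non-degenerate.

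The two points I expect to demand the most care are the translation-to-Sard reduction and the confinement estimate. For the former one must verify that translating the positions is a legitimate variation within the parameter space and that the Hessian is \emph{exactly} $D\tilde G$, so that degeneracy coincides with $-a$ being a critical value; for the latter one must make the blow-up of $|G|$ near each $z_i$ and at infinity quantitative enough to produce a genuine compact trapping region, since it is this compactness, together with isolation, that converts non-degeneracy into finiteness.
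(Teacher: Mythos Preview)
Your proposal is correct and follows essentially the same route as the paper. The paper also recasts the critical-point equation as $\psi(w)=z_1$ for a smooth map $\psi$ on the punctured plane (your $\tilde G(w)=-a$, with $a$ playing the role of $z_1$), applies Sard's lemma to conclude that for almost every value of this translation parameter the preimage consists of finitely many regular points, uses the same confinement estimate ($|\psi|$ blows up near the $z_i$ and at infinity) to trap solutions in a compact set, and then invokes Tonelli/Fubini to pass from slicewise null to null in the full parameter space; your reparametrization $(a,\xi,m)$ and your explicit isolation-plus-compactness argument for finiteness are just a tidier packaging of the same ideas.
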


\begin{proof}
Computing the gradient, we write \eqref{eq:main} as
\be\label{eq:grad}
z - \sum_{i=1}^n m_i \frac{ z-z_i }{|z-z_i|^{3}} = 0.
\ee
Fixing the parameter $z_1$ let us use the change of variable $z=w-z_1$,
and new parameters $w_i=z_i+z_1$ for $i=2,3,...,n$.
Then \eqref{eq:grad} becomes
\be\label{eq:gradw}
\underbrace{w - m_1\frac{w}{|w|^3} - \sum_{i=2}^n m_i \frac{ w-w_i }{|w-w_i|^{3}}}_{\psi(w)} = z_1.
\ee
where $\psi : \RR^2\setminus \{0,w_2,...,w_n\} \to \RR^2$.
Note this change of variables describes a translation of the graph of the original potential function given in (\ref{eq:pot}) so that the origin corresponds to the singularity at $z_1$.
We may now rephrase the critical point condition (\ref{eq:main}) in order to apply Sard's Theorem; solutions $z$ of the critical point equations (\ref{eq:main}) correspond through the change of variables $w=z+z_1$ to preimages $w\in\psi^{-1}(z_1)$ of $z_1$ under the map $\psi$.  


Let 
$$V \subset\{(z_1,w_2,w_3,...,w_n,m_1,m_2,...,m_n)\in\RR^{2n} \times \RR_+^n\},$$ denote the set of parameters for which \eqref{eq:gradw} has infinitely many solutions,
where $\RR_+^{n}$ is the $n$-fold cartesian product of the non-negative reals.
Our goal is to show that $V$ has measure zero.
It suffices to show that each slice of $V$ in the direction of the $z_1$ coordinate (that is, any slice obtained by fixing values of the other parameters $(w_2,w_3,...,w_n,m_1,m_2,...,m_n)$)
has two-dimensional measure zero.


Fixing the values of $(w_2,w_3,...,w_n,m_1,m_2,...,m_n)$,
we note that $|\psi(w)|$ is bounded from below by a positive constant 
for $|w|$ large and $w$ near $w_i$. 
Hence, for $R>0$ sufficiently large and $\e>0$ sufficiently small, solutions to (\ref{eq:gradw}) are in the domain $\mathcal{D} \subset \RR^2$ defined by
\be \label{eq:disc}
\mathcal{D} = \mathring D_{R}(0) - D_{\varepsilon}(0) - \bigcup_{i=2}^n D_{\varepsilon}(w_i),
\ee
where $D_r(u)$ denotes the disc of radius $r$ centered at $u$. 

Solutions of equation \eqref{eq:gradw}
are preimages of $z_1$ under the smooth mapping $
\psi $ restricted to $\mathcal{D}$.
The number of preimages of a point $z_1$ is finite if $z_1$ is a regular value of $\psi$
\cite[p. 8]{Milnor}.

By Sard's Lemma \cite[p. 16]{Milnor}, the set of critical values of $\psi$ has measure zero, and this proves that each slice of $V$ in the direction of the $z_1$ coordinate has two-dimensional measure zero. This implies, for instance by Tonelli's theorem, that $V$ itself has measure zero.
\end{proof}

\subsection{Lower Bound for the Number of Equilibria}

\begin{thm}\label{thm:LB}
For a generic set of parameters, the number of critical points of $F$ is bounded from below by $n+1$.
\end{thm}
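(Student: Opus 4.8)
The plan is to count critical points via Morse theory (equivalently, Poincar\'e--Hopf) on a compact trapping region, combined with the observation that $F$ is strictly subharmonic and hence has no local maxima. First I would invoke Theorem~\ref{thm:genfin} to restrict to a generic parameter set on which the critical points are finite in number and all nondegenerate. I would then work on the compact surface-with-boundary $\mathcal{D}$ from \eqref{eq:disc}, namely a large disc of radius $R$ with $n$ small discs around the singularities $z_1,\dots,z_n$ removed. Using that $F\to+\infty$ both as $|z|\to\infty$ and as $z\to z_i$, I would check that for $R$ large and $\varepsilon$ small all critical points lie in the interior $\mathring{\mathcal{D}}$, and that $\nabla F$ points outward along every boundary circle: radially outward on the outer circle $\partial D_R(0)$, and toward $z_i$ (hence out of $\mathcal{D}$) on each inner circle $\partial D_\varepsilon(w_i)$.

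Second, I would compute the topology of the region: $\mathcal{D}$ is a disc with $n$ holes, so $\chi(\mathcal{D}) = 1-n$. Since $\nabla F$ points outward along $\partial\mathcal{D}$, the Poincar\'e--Hopf theorem gives $\sum_{\text{crit}}\operatorname{ind}(\nabla F)=\chi(\mathcal{D})=1-n$. At a nondegenerate critical point the vector-field index of $\nabla F$ equals $\operatorname{sgn}\det(\operatorname{Hess} F)=(-1)^{\lambda}$, where $\lambda$ is the Morse index, so writing $m_\lambda$ for the number of critical points of index $\lambda$ yields $m_0 - m_1 + m_2 = 1-n$.

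Third --- and this is the key observation --- I would note that $\Delta F = 2 + \sum_{i=1}^n m_i/|z-z_i|^3 > 0$, so $F$ is strictly subharmonic on $\mathcal{D}$. Hence at any critical point the Hessian has positive trace and cannot be negative semidefinite, so there are no local maxima: $m_2 = 0$, and every critical point is either a minimum ($\lambda=0$) or a saddle ($\lambda=1$). This gives $m_0 - m_1 = 1-n$, so the total number of critical points is $m_0 + m_1 = 2m_0 + n - 1$. Finally, since $F$ is continuous on the compact set $\mathcal{D}$ and strictly decreases as one moves inward from $\partial\mathcal{D}$, it attains its minimum at an interior point, whence $m_0\ge 1$ and $2m_0 + n - 1 \ge n+1$.

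The main obstacle I anticipate is the careful bookkeeping needed to set up the trapping region and to verify the outward-pointing boundary condition, so that the total index equals $\chi(\mathcal{D})$ and every critical point is genuinely captured; one must also confirm $m_0\ge 1$ by locating a minimum strictly inside $\mathcal{D}$. By contrast, the subharmonicity step that eliminates maxima is short, but it is precisely what forces $m_2=0$ and thereby makes the index identity yield the clean bound $n+1$.
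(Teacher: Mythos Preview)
Your argument is correct, and the setup---the trapping region $\mathcal{D}$, the outward-pointing gradient on $\partial\mathcal{D}$, and the invocation of Theorem~\ref{thm:genfin}---coincides with the paper's. The endgame, however, differs. The paper applies the weak Morse inequalities $N_i \geq B_i$ directly: since $B_0(\mathcal{D})=1$ and $B_1(\mathcal{D})=n$, one gets $N \geq \sum B_i = n+1$ with no further analytic input. You instead use only the Poincar\'e--Hopf identity $m_0 - m_1 + m_2 = \chi(\mathcal{D}) = 1-n$, and then supply two additional ingredients to convert this into a lower bound on $m_0+m_1+m_2$: the subharmonicity $\Delta F > 0$ to force $m_2=0$, and a compactness argument to force $m_0 \geq 1$. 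Your route trades the Morse inequalities for a more elementary topological input (Poincar\'e--Hopf) at the cost of these extra analytic steps; as a bonus it yields the sharper structural statement that $F$ has no local maxima and that the critical points consist of exactly $m_0 \geq 1$ minima and $m_0 + n - 1$ saddles. The paper's route is shorter but does not record this refinement. A minor quibble: the domain you want is the one in the proof of Theorem~\ref{thm:LB} (disks removed at the $z_i$), not literally the domain labeled \eqref{eq:disc}, which lives in the translated $w$-coordinates of Theorem~\ref{thm:genfin}.
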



\bigskip

The proof of this theorem follows the method from \cite{Pe2} based on Morse theory in the setting of manifolds with boundary.
Let us first recall the Morse inequalities which, under suitable conditions, give a lower bound for the number of critical points of a smooth function in terms of the topology of its domain. Let $f:U\subset\RR^n \to \RR$ be smooth, $U$ open. A critical point of $f$ is called \textit{nondegenerate} if the Hessian determinant of $f$ is nonzero at that point. Such a function is called \textit{Morse} over $V\subseteq U$ if all critical points in $V$ are nondegenerate. Let $D\subset \RR^{n}$ be a
a domain whose closure is compact. If the boundary $\pd D$ of $D$ can be considered a smooth $(n-1)$-manifold, that is if $\pd D$ constitutes a locally-Euclidean Hausdorff space that can be fitted with a smooth differentiable structure, then $D$ is called a \textit{regular smooth domain}. 

Assume $f$ is Morse over a neighborhood of the closure of a regular smooth domain $D$, and satisfies that the gradient $\nabla f$ is outward-pointing at every point of $\pd D$. Let $N_i$ be the number of critical points of $f$ with \textit{index} $i$, that is whose Hessian has $i$ negative eigenvalues. Then the following are referred to as the \textit{Weak Morse inequalities}: $$N_i \geq B_i,$$ where $B_i$ is the $i^{th}$ Betti number of $D$. Hence the total number of critical points  $N=\sum N_i$ is bounded from below by 
\be\label{eq:Morse}
N \geq \sum B_i.
\ee

\begin{proof}[Proof of Theorem \ref{thm:LB}]
Let $F$ be defined as in (\ref{eq:pot}).
By Theorem \ref{thm:genfin}, for a generic choice of parameters we may assume $F$ has finitely many critical points, all non-degenerate. 
As in the proof of Theorem \ref{thm:genfin}, 
we let $\mathcal{D} \subset \RR^2$ denote a domain containing all critical points of $F$ obtained by taking a disk of sufficiently large radius $R$ and removing $n$ small disks with sufficiently small radius $\e>0$ centered at the points $z_i$, i.e.,
\be
\mathcal{D} = \mathring D_{R}(0) - \bigcup_{i=1}^n D_{\varepsilon}(z_i),
\ee
where $D_r(u)$ denotes the disc of radius $r$ centered at $u$. 

Then $\cD$ is a regular smooth domain, and $F$ is Morse over a neighborhood of the closure of $\mathcal{D}$. Moreover, for $\e>0$ sufficiently small and $R$ sufficiently large, the gradient $\nabla F$ is outward-pointing on $\pd \mathcal{D}$. Indeed, consider the boundary component $\{ |z-z_1| = \e \}$.  We have
$$
\nabla F = -\frac{z-z_1}{|z-z_1|^3} + \underbrace{z - \sum_{i=2}^n \frac{z-z_i}{|z-z_i|^3}}_{F_{1}(z)}.
$$
In a neighborhood of $z_1$, the term $F_1(z)$ is bounded in norm by a constant, while $-\frac{z-z_1}{|z-z_1|^3}$ points in the outward normal direction along each circle $|z - z_1|=\e$ of radius $\e>0$ centered at $z_1$ and has norm approaching infinity as $\e \rightarrow 0$.  Thus, for sufficiently small $\e$, $\nabla F$ has positive dot product with the outward pointing normal at each point along $|z - z_1|=\e$, i.e., $\nabla F$ is outward pointing.
Similarly, $\nabla F$ is outward pointing along each of the boundary components $\{|z-z_i|=\e\}$ if $\e>0$ is sufficiently small.  Along the large circle $\{|z|=R\}$, the dominant term in the gradient
$
\nabla F = z - \sum_{i=1}^n \frac{z-z_i}{|z-z_i|^3}
$
is $z$ which points in the outward normal direction and has norm approaching infinity as $R \rightarrow \infty$, while the remaining terms have norm approaching zero.  Thus, for $\e>0$ sufficiently small, $\nabla F$ is outward pointing along all boundary components.

Since $\mathcal{D}$ is
path-connected, $B_0(\mathcal{D})=1$. Since a disk is contractible, $D_R(0)$ has first Betti number zero.  Each removal of a small disk $D_\e(z_i)$ corresponds to the creation of an independent 1-cycle and thus $B_1(\mathcal{D})=n$. For larger values of $d$, we have $B_d(\mathcal{D})=0$. Thus the estimate \eqref{eq:Morse} based on the Morse inequalities gives the desired lower bound $N\geq n+1$ for the number $N$ of equilibria.
\end{proof}

\begin{remark}
With minor adaptations in the proof, Theorem \ref{thm:LB} holds for a more general class of potentials replacing the Newtonian potential $|z-z_i|^{-1}$ with a Riesz potential $|z-z_i|^{-\alpha}$, $\alpha>0$.
\end{remark}

\subsection{An Example Showing Sharpness of Theorem \ref{thm:LB}}

Here we provide an example where the lower bound in Theorem \ref{thm:LB} is attained, i.e., the number of equilibria is exactly $n+1$.


Consider $n$ masses at $(z_1,z_1),(z_2,z_2),...,(z_n,z_n)$ with mass $1$, and $0< z_1 < z_2 < \cdots < z_n$.
We will show that this configuration has exactly $n+1$ equilibria.
The equilibria are solutions of the system
\be\label{eq:Hpert}
\begin{cases}
\displaystyle F_x=x-\sum_{i=1}^n \frac{x-z_i}{((x-z_i)^2+(y-z_i)^2)^{3/2}} =0 \\
\displaystyle
F_y=y-\sum_{i=1}^n \frac{y-z_i}{((x-z_i)^2+(y-z_i)^2)^{3/2}} =0 
\end{cases},
\ee
Note that equilibrium points satisfy $F_x-F_y=0$.
We claim that this equation is only satisfied on the line $x-y=0$. Indeed, suppose $x-y \neq 0$.
We have
\begin{align*}
F_x-F_y&=x-y-\sum_{i=1}^n \frac{x-z_i-(y-z_i)}{((x-z_i)^2+(y-z_i)^2)^{3/2}} \\
&=x-y-(x-y)\sum_{i=1}^n
\frac{1}{((x-z_i)^2+(y-z_i)^2)^{3/2}}.
\end{align*}
So $F_x-F_y=0$ implies
\be\label{eq:FxFy}
1 - \sum_{i=1}^n \frac{1}{((x-z_i)^2+(y-z_i)^2)^{3/2}} = 0.
\ee

Since $x-y \neq 0$, $x$ and $y$ are not both zero.  Since $x$ and $y$ play symmetric roles in the problem, we may assume without loss of generality that $x \neq 0$. 


Now, consider $F_x/x$. We have
\begin{align}
\frac{F_x}{x} &= 1-\sum_{i=1}^n\frac{1-z_i/x}{((x-z_i)^2+(y-z_i)^2)^{3/2}} \\
&=\frac{1}{x}\sum_{i=1}^n\frac{-z_i}{((x-z_i)^2+(y-z_i)^2)^{3/2}},
\end{align}
where we have applied \eqref{eq:FxFy} in going from the first to the second line above.
Since $z_i>0$ for all $i$, we conclude that $F_x/x$ is nonzero.

As desired, this shows that if $x-y \neq 0$, then $(x,y)$ is not a point of equilibrium.

\bigskip

Now consider the line $y=x$. Note that on this line we have
\be
F_x(x,x)=F_y(x,x)=f(x):=-\sum_{i=1}^n \frac{x-z_i}{(2(x-z_i)^2)^{3/2}},
\ee 
so if $F_x=0$, then $F_y=0$ as well.
This reduces the problem to finding zeros of the univariate function $f(x)$.
We obtain, after differentiation,
$$\frac{df}{dx}(x)=1+\sum_{i=1}^n\frac{1}{\sqrt{2}((x-z_i)^2)^{3/2}}.$$

Thus, $\frac{df}{dx}$ is positive for all $x \neq z_i$, so $f$ is strictly increasing on each of the intervals $$(-\infty,z_1), (z_1,z_2),..., (z_n,\infty).$$ As such, $f$ has at most one zero on each interval. As there are $n+1$ intervals, there are at most $n+1$ equilibria, and by Theorem \ref{thm:LB} there are exactly $n+1$ equilibria.

\section{Estimates for the Maximum Number of Equilibria}

Let us next consider the maximum $N(n)$ of the number of equilibria taken over the generic set of parameters for which the number of equilibria is finite.
We provide some estimates for $N(n)$ in this section.  We first consider the problem of estimating $N(n)$ from above.

\subsection{An Upper Bound for the Maximum Number of Equilibria}

In \cite{KuRoSm}, the authors consider a special class of equilateral triangle examples, where they use BKK theory to obtain an upper bound of $196$.
The sharp upper bound of $10$ was obtained later \cite{BaLe} using a rigorous computational study 
of the bifurcation curve (a curve in the reduced parameter space separating cases according to number of equilibria).

The following upper bound grows exponentially in $n$.  We believe this is far from sharp, and it is even tempting to conjecture that $N(n)$ grows linearly in $n$.

\begin{thm}\label{thm:UB}
For a generic set of parameters, the maximum number $N(n)$ of equilibria for Problem \ref{prob:main} satisfies
\be 
N(n) \leq  4^{n+2}.
\ee 
\end{thm}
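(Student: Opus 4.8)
The plan is to turn the transcendental critical-point system \eqref{eq:grad} into a genuinely polynomial one by introducing auxiliary unknowns that absorb the fractional powers, and then to bound the number of complex solutions by BKK (Bernstein--Khovanskii--Kushnirenko) theory, in the spirit of \cite{KuRoSm}. Writing $s_i = (x-x_i)^2+(y-y_i)^2$ and introducing new variables $a_i = m_i s_i^{-3/2}$, each equilibrium $(x,y)$ (at which $s_i>0$) yields a solution of the polynomial system $f_1 := x-\sum_{i=1}^n a_i(x-x_i)=0$, $f_2 := y-\sum_{i=1}^n a_i(y-y_i)=0$, and $g_i := a_i^2 s_i^3 - m_i^2=0$ for $1\le i\le n$, in the $n+2$ unknowns $(x,y,a_1,\dots,a_n)$. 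Conversely, the number of equilibria is at most the number of complex solutions of this system, so it suffices to bound the latter. Since $a_i\neq 0$ at an equilibrium and, for generic parameters, no equilibrium lies on the axes $\{x=0\}$ or $\{y=0\}$ (a measure-zero condition, consistent with the genericity of Theorem \ref{thm:genfin}), I may assume all equilibria correspond to solutions in the complex torus $(\CC^*)^{n+2}$.

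Next, for generic parameters Theorem \ref{thm:genfin} already gives finitely many equilibria, and the polynomial system above is then itself zero-dimensional (this is the promised ``alternative algebraic proof of finiteness''). Hence Bernstein's theorem applies: the number of isolated solutions in $(\CC^*)^{n+2}$, counted with multiplicity, is at most the mixed volume of the Newton polytopes $P_1,P_2,Q_1,\dots,Q_n$ of $f_1,f_2,g_1,\dots,g_n$. The point of passing to the $a_i$ (rather than clearing denominators directly, which produces equations of degree $3n+1$) is that it keeps the individual degrees bounded, so that the sparse BKK bound is far better than the naive B\'ezout value $2\cdot 2\cdot 8^n$.

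The heart of the argument is the estimate of this mixed volume. Recording the width of each Newton polytope in each coordinate direction gives a width matrix $W$: $f_1$ has width $1$ in $x$, $0$ in $y$, and $1$ in each $a_i$; $f_2$ symmetrically has width $0$ in $x$, $1$ in $y$, and $1$ in each $a_i$; while $g_i$ has width $6$ in each of $x,y$ (from the factor $s_i^3$), width $2$ in $a_i$, and width $0$ in $a_j$ for $j\neq i$. By monotonicity of mixed volume under inclusion of the polytopes in their bounding boxes, the mixed volume is at most the mixed volume of those boxes, which equals $\operatorname{per}(W)$. Evaluating this permanent---equivalently, running the multihomogeneous B\'ezout bound with the variables grouped as $\{x,y\}$ and the singletons $\{a_1\},\dots,\{a_n\}$---produces a quantity of the shape $2^n\,p(n)$ with $p$ a fixed quadratic polynomial in $n$, and a direct inequality then gives $2^n p(n)\le 4^{n+2}$ for every $n\ge 1$ (the tightest case being small $n$, while exponential growth dominates thereafter).

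I expect the main obstacle to be exactly this mixed-volume computation: one must exploit the sparsity carefully enough to see that the count stays at $2^n\cdot\mathrm{poly}(n)$ rather than the much larger product-of-degrees bound, which forces the permanent / multihomogeneous calculation rather than a crude estimate. A secondary technical point is the passage between the equilibrium count and the torus count: one must keep all equilibria off the coordinate axes (handled by genericity, or by the rotational invariance of $F$ about the origin), and one must note that the sign ambiguity $a_i=\pm m_i s_i^{-3/2}$ only enlarges the complex solution set and therefore does not affect the upper bound.
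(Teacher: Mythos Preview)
Your approach is correct in outline, but it is genuinely different from the paper's, and in one respect more laborious. The paper polynomialises with the auxiliary variable $w_i=|z-z_i|^{-1}$ (so that $w_i^{3}$ replaces the offending power and $w_i^{2}s_i=1$), which makes \emph{every} one of the $n+2$ equations have total degree exactly $4$; then the classical B\'ezout bound gives $4^{n+2}$ in one line, with no BKK machinery at all. By contrast, your choice $a_i=m_i s_i^{-3/2}$ pushes the fractional power entirely into the side constraints $g_i=a_i^{2}s_i^{3}-m_i^{2}$, which are of degree $8$; this is why naive B\'ezout overshoots to $4\cdot 8^{n}$ and you are forced to invoke Bernstein's theorem plus a permanent/box estimate on the Newton polytopes to recover the stated bound. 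Your route does buy something: the permanent you describe evaluates to $2^{n}(18n^{2}-12n+1)$, which is indeed $\le 4^{n+2}$ for all $n\ge 1$ (the tightest case is $n=3$, with $127\le 128$) and is asymptotically far sharper than $4^{n+2}$---in fact it lands in the same $n^{2}2^{n}$ regime that the paper later reaches only computationally via \texttt{PHCpack}.

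One genuine gap to flag: your sentence ``Theorem~\ref{thm:genfin} already gives finitely many equilibria, and the polynomial system above is then itself zero-dimensional'' is not a valid implication. Finiteness of the \emph{real} equilibria says nothing about positive-dimensional \emph{complex} components of the enlarged system (which has many extra solutions, e.g.\ with the wrong sign of $a_i$). You can repair this in either of two ways. First, Bernstein's theorem bounds the number of \emph{isolated} torus solutions regardless of dimension, and a nondegenerate equilibrium (guaranteed generically by Theorem~\ref{thm:genfin}) corresponds to a simple, hence isolated, solution of your $(n{+}2)$-system, since the Jacobian of $(g_1,\dots,g_n)$ in the $a$-variables is diagonal with entries $2a_i s_i^{3}\neq 0$. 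Second---and this is what the paper actually does for its own system---one can argue algebraically that the full polynomial system is nondegenerate on a Zariski-open set of parameters by exhibiting a single specialisation (the paper takes $x_i=y_i=1$, $m_i=0$) where the augmented system $\{\hF=0,\ \det D\hF=0\}$ is inconsistent. Either fix closes the argument.
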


\begin{proof}
We associate to the original system of equations
\be\label{eq:original}
\begin{cases}
\displaystyle x-\sum_{i=1}^n \frac{x-x_i}{((x-x_i)^2+(y-y_i)^2)^{3/2}} =0 \\
\displaystyle
y-\sum_{i=1}^n \frac{y-y_i}{((x-x_i)^2+(y-y_i)^2)^{3/2}} =0 
\end{cases},
\ee
a system of \emph{polynomial} equations
by introducing the auxiliary variables
$$w_i=\sqrt{\frac{1}{(x-x_i)^2+(y-y_i)^2}},$$
and appending the additional equations
$
w_i^2((x-x_i)^2+(y-y_i)^2))-1=0
$
for $i=1,2,...,n$.
Then each solution $(x,y)$ of the original system \eqref{eq:original} gives rise to a solution $(x,y,w_1,w_2,...,w_n)$ of the system
\be\label{eq:append}
\begin{cases}
\displaystyle x - \sum_{i=1}^n m_i(x-x_i)w_i^3 = 0 \\
\displaystyle
y-\sum_{i=1}^n m_i(y-y_i)w_i^3  =0 \\
w_i^2((x-x_i)^2+(y-y_i)^2)-1=0, \quad i = 1,2,...,n
\end{cases}.
\ee
This is a system of $n+2$ polynomial equations in $n+2$ unknowns. 
We will estimate the number of complex solutions 
which gives an upper bound on the number of real solutions.
Each equation is of fourth degree, so if the number of solutions to the system of equations is finite, then by B\'ezout's theorem the number of solutions is at most $4^{n+2}$.
This gives the desired upper bound for the maximum number $N(n)$ of equilibria.


Hence, we just need to show that, for a generic set of parameters, the number of solutions of the system \eqref{eq:append} is indeed finite.

Let $\Delta \subset \RR^{3n}$ denote the set of parameters $m_i$, $x_i$, $y_i$ for which the system of equations \eqref{eq:append} is degenerate.  
Let 
$$\zeta = (x,y,w_1,w_2,...,w_n) \in \CC^{n+2},$$ and $$\eta = (m_1,m_2,...,m_n,x_1,y_1,x_2,y_2,...,x_n,y_n) \in \RR^{3n}$$ Then $\Delta$ is the projection onto the second factor (the space of parameters $\eta \in \RR^{3n}$) of the set
\be\label{eq:sysdeg}
\{ (\zeta,\eta) \in \CC^{n+2} \times \RR^{3n} : \hF(\zeta)=0, \; \det (D_\zeta \hF)=0 \},
\ee
where $\hF(\zeta)$ is the vector-valued function of left-hand-sides of equations from the system \eqref{eq:append}, and $\det(D_\zeta \hF)$  denotes the Jacobian determinant.
The set $\Delta$ is an algebraic variety.
It is the zero set of the polynomial $R \big\rvert_{\nu = \mu(\eta)}$, obtained by first computing the resultant $R$ for the system \eqref{eq:sysdeg} but with a general list $\nu$ of coefficients, and then making appropriate substitutions $\nu = \mu(\eta)$ corresponding to the polynomial dependence of the coefficients $\nu$ on the parameters $\eta$ (e.g., the coefficient of $w_i^3$ in the first equation is $-m_i x_i$).
For the fact that the resultant $R$ in general gives a defining polynomial for the discriminant of a system, we refer the reader to \cite[Ch. 13]{Gelfand}.

We are done if we show that $\Delta$ is a \emph{proper} subset of the parameter space $\RR^{3n}$ (this will imply that it has codimension at least one, since $\Delta$ is an algebraic variety).  
To show that $\Delta$ is proper, it is enough to exhibit one choice of parameters for which $\hF$ is non-degenerate.
We choose $x_i = 1, y_i = 1, m_i=0$.
Then the system \eqref{eq:sysdeg} becomes
\be\label{eq:append2}
\begin{cases}
\displaystyle x = 0 \\
\displaystyle
y  =0 \\
w_i^2((x-1)^2+(y-1)^2)-1=0, \quad i = 1,2,...,n, \\
\prod_{i=1}^n 2w_i ( (x-1)^2 + (y-1)^2) = 0.
\end{cases}
\ee
We can see that this system has no solutions by substituting $x=0, y=0$ from the first two equations into the remaining equations which gives the system of equations
$2w_i^2=1$ for $i=1,2,...,n$ and $\prod_{i=1}^n w_i =0$ which has no common solutions.
So the system \eqref{eq:append2} is non-degenerate for this choice of parameters, and this completes the verification that the number of solutions of \eqref{eq:append2} is finite for a generic set of parameters.
\end{proof}

\medskip

\subsection{Upper Bounds from Computational Algebra}

It is possible to improve the estimate in Theorem \ref{thm:UB} by considering
  particular small cases of the polynomial system appearing in \eqref{eq:append}.
  For instance, using the computational algebraic software {\tt Macaulay2} \cite{M2},
  we compute the {\em degree} of the corresponding zero-dimensional algebraic variety.
  This will bound the number of complex solutions, which in turn bounds the real solutions.
\begin{table}[h!]
  \begin{center}
    \caption{Upper bounds for $N(n)$.}
    \label{tab:table1}
    \begin{tabular}{c|r|r} 
       & Macaulay2 & $4^{n+2}$ \\
      $n$ &  &  \\
      \hline
      2 & 120  & 256 \\
      3 & 696 & 1024 \\
      4 & 3544 & 4096
    \end{tabular}
  \end{center}
\end{table}
Unfortunately, for $n > 4$, the degree computation does not terminate.

Computing degree with {\tt Macaulay2}  can also provide improved estimates for particular cases.
For instance, for $n=2$ with the point masses located at $(1,0)$ and $(-1,0)$, we find that the number of equilibria is at most $52$.

  The {\tt Macaulay2} bound can be improved for
  larger examples by applying BKK theory, as was
  employed by \cite{KuRoSm}.
  We use the software {\tt PHCpack} \cite{PHCpack} to compute the mixed volume $MV_n$
  of the Newton polytopes for the $n+1$-body system described in Equations \eqref{eq:append}.
  Degree computation in {\tt Macaulay2} is performed using the symbolic Gr\"{o}bner basis
  algorithm, while {\tt PHCpack} computes the mixed volume primarily through linear
  optimization. For this reason, it is able to handle much larger examples.

  Bernstein's Theorem \cite{Bernstein} implies that the mixed volume is an upper bound
  for the number of solutions to the system in the algebraic torus $\mathbb{C}_*^n$. We must be cautious
  because our system can have solutions with $x$ or $y$ equal to $0$; however, a generic rotation of the plane preserves
  the number of solutions while removing them (except the origin) from the $x$- and $y$-axes.
  As for the $w_i$, these are guaranteed to be nonzero by the equations. So, we can assume that all
  but one of our solutions is in the torus, thus the mixed volume plus one is an upper bound for us.

  We calculated $MV_n$ for $n$ up to $50$. The $n = 50$ case took only 3
  seconds to compute, but {\tt PHCpack} starts making integer overflow errors.
  For all integers $1 \leq n \leq 50$, $MV_n$ agrees with the
  function $f(n) = (9n^2 + 3n - 4)2^{n-1}$. We obtained this formula by first factoring
  out the exponential term, then matching the remaining sequence to {\tt A283394}
  in the OEIS \cite{oeis}.

  In an attempt to further simplify the form of the Newton polytopes, we can introduce
  even more new variables: $a_i$ and $b_i$ to denote the differences $x-x_i$ and $y-y_i$,
  respectively. After adding these variables in, we can remove $x$ and $y$ since they are equal
  to $x_i + a_i$ and $y_i + b_i$ for any $i$. The resulting system is below:

  \be\label{eq:append3}
\begin{cases}
\displaystyle (a_1 + x_1) - \sum_{i=1}^n m_i a_i w_i^3 = 0 \\
\displaystyle
(b_1 + y_1) -\sum_{i=1}^n m_ib_iw_i^3  =0 \\
w_i^2(a_i^2+b_i^2)-1=0, & i = 1,2,\ldots,n \\
a_i = a_1 + (x_1-x_i), & i = 2,3,\ldots,n \\
b_i = b_1 + (y_1-y_i), & i = 2,3,\ldots,n \\
\end{cases}.
\ee

This is a system of $ 2+ n +(n-1)+(n-1) = 3n$ equations in $3n$ variables; recall that $m_i,x_i$, and $y_i$ are
fixed parameters. Here, too, we computed the sequence of mixed volumes $\widetilde{MV}_n$ of these Newton polytopes
up to $n = 50$. Because of the larger number of variables, the computation took longer: the $n=50$ case took 2 minutes
and 17 seconds to compute. For the first 50 values, $\widetilde{MV}_n$ agrees with the function $g(n) = (9n^2 + n + 2)2^{n-1}$ 
(again using OEIS and finding {\tt A006137}). This is a better bound than $MV_n$ for all $n > 3$.

\begin{conj}
  The sequence of polynomial systems defined by \eqref{eq:append3} has mixed volume $\widetilde{MV}_n = (9n^2 + n + 2)2^{n-1}$.
  As a corollary, the maximum number of solutions $N(n)$ would satisfy
  \[N(n) \leq (9n^2 + n + 2)2^{n-1} + 1.\]
  \end{conj}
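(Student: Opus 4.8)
The plan is to compute the mixed volume $\widetilde{MV}_n$ of the $3n$ Newton polytopes attached to \eqref{eq:append3} directly as a combinatorial quantity; solution-counting is not available here, since \eqref{eq:append3} is highly non-generic for its own polytopes (that is exactly why its Bernstein bound improves on that of \eqref{eq:append}), so Bernstein's theorem only asserts that $\widetilde{MV}_n$ is an upper bound, not that it equals the generic root count. First I would record the supports. Writing $e_{a_i}, e_{b_i}, e_{w_i}$ for the standard basis of the exponent lattice $\mathbb{Z}^{3n}$, the two ``sum'' equations give $S_1=\mathrm{conv}\big(\{0,e_{a_1}\}\cup\{e_{a_i}+3e_{w_i}\}_{i=1}^n\big)$ and $S_2=\mathrm{conv}\big(\{0,e_{b_1}\}\cup\{e_{b_i}+3e_{w_i}\}_{i=1}^n\big)$; each quadric yields the triangle $R_i=\mathrm{conv}\{0,\,2e_{a_i}+2e_{w_i},\,2e_{b_i}+2e_{w_i}\}$; and for $i\ge 2$ the linear equations give the triangles $L_i^a=\mathrm{conv}\{0,e_{a_1},e_{a_i}\}$ and $L_i^b=\mathrm{conv}\{0,e_{b_1},e_{b_i}\}$.

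The essential feature is a block structure: the $3n$ coordinates split into blocks $(a_i,b_i,w_i)$, and every polytope except $S_1$ and $S_2$ is supported on a single block or couples block $i$ to block $1$ only through $e_{a_1},e_{b_1}$. I would compute $\widetilde{MV}_n$ via its interpretation as the total normalized volume of the full-dimensional mixed cells in a generic regular fine mixed subdivision of $S_1+S_2+\sum_i R_i+\sum_{i\ge 2}(L_i^a+L_i^b)$, using Minkowski-multilinearity of the mixed volume to organize the bookkeeping (the linear-equation triangles and the quadric triangles each have only three vertices, so their possible contributions to a single mixed cell are very limited). Appending one block $(a_{n+1},b_{n+1},w_{n+1})$ together with its three polytopes $R_{n+1},L_{n+1}^a,L_{n+1}^b$ should produce a first-order recurrence $\widetilde{MV}_{n}=2\,\widetilde{MV}_{n-1}+c_n$, in which the factor $2$ is the generic local contribution of the new block and $c_n$ collects the cells that couple the new block to the two global sum-polytopes. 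Consistency with the conjectured closed form indicates the target $c_n=(18n-8)2^{n-1}$, which must then be established by the cell count; solving the resulting inhomogeneous recurrence against the computed base case $\widetilde{MV}_1=12$ returns $(9n^2+n+2)2^{n-1}$ for all $n$.

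The main obstacle is precisely the global coupling carried by $S_1$ and $S_2$: every other polytope localizes to a block, so those contributions scale cleanly, but $S_1$ and $S_2$ meet all blocks at once and their incident mixed cells do not decouple when a block is appended. In particular the quadratic term $9n^2$ and the linear term $n$ in the closed form must originate entirely from cells that use one or both sum-polytopes, so computing $c_n$ exactly is the crux. I would attack this by fixing the lifting induced by the generic physical parameters $m_i,x_i,y_i$ themselves and classifying the mixed cells according to whether $S_1$ (respectively $S_2$) contributes a vertex, an edge, or a full two-dimensional face; in each case the remaining factors reduce to a mixed volume in the block coordinates that can be evaluated in closed form, and the task becomes to show that summing these contributions reproduces $(18n-8)2^{n-1}$ with the predicted coefficients. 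Verifying that the edge-type and face-type cells of $S_1$ and $S_2$ supply exactly the quadratic growth, and no more, is where I expect the real work to lie.
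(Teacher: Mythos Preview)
The statement you are attempting to prove is presented in the paper as a \emph{conjecture}, not a theorem; the authors give no proof. Their evidence is purely computational: they ran \texttt{PHCpack} to compute $\widetilde{MV}_n$ for $1\le n\le 50$, observed agreement with $(9n^2+n+2)2^{n-1}$, and obtained the closed form by factoring out $2^{n-1}$ and matching the residual sequence in the OEIS. There is therefore no paper proof to compare your proposal against.

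As for the proposal itself, it is a plausible strategy rather than a proof. The recurrence ansatz $\widetilde{MV}_n = 2\,\widetilde{MV}_{n-1} + c_n$ and the target $c_n=(18n-8)2^{n-1}$ are algebraically consistent with the conjectured closed form and with the base value $\widetilde{MV}_1=12$, but the entire content of the conjecture lies in establishing that specific value of $c_n$. You yourself flag this as ``the crux'' and ``where I expect the real work to lie,'' and you do not carry it out; the classification of mixed cells by how $S_1$ and $S_2$ participate is a reasonable organizing principle, but without executing the case analysis and summing the contributions there is no argument, only a roadmap. Moreover, the claim that appending one block contributes a clean homogeneous factor of $2$ is itself nontrivial: the new polytopes $L_{n+1}^a$ and $L_{n+1}^b$ couple block $n+1$ to block $1$ through $e_{a_1},e_{b_1}$, so the ``local'' part of the recursion does not obviously decouple from the global sum-polytopes in the way you assume. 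In short, the paper offers no proof, and your proposal, while sensibly oriented, leaves the essential combinatorial computation (and even the form of the recurrence) unjustified.
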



\subsection{A Lower Bound on the Maximum Number of Equilibria}


\begin{thm}\label{thm:5n-5}
The maximum number $N(n)$ of solutions to \eqref{eq:FxFy} satisfies the lower bound  $$N(n) \geq 5n-5 $$ for all $n \geq 2$.
\end{thm}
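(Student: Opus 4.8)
The plan is to prove this constructively by exhibiting a single configuration with at least $5n-5$ equilibria, namely the ring configuration, and exploiting its symmetry to reduce the count to a family of one-dimensional problems. I would place $n-1$ equal masses $m$ at the vertices $r\omega^{k}$ ($\omega=e^{2\pi i/(n-1)}$, $k=0,\dots,n-2$) of a regular $(n-1)$-gon together with a small central mass $m_0>0$ at the origin, so that the potential $F$ of \eqref{eq:pot} is invariant under the dihedral group $D_{n-1}$. This group has $n-1$ reflection axes, and hence $2(n-1)$ rays issuing from the origin: $n-1$ \emph{vertex rays} pointing toward a mass and $n-1$ \emph{edge rays} pointing into a gap between adjacent masses. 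Along any such axis the reflection symmetry forces the component of $\nabla F$ perpendicular to the axis to vanish, so a critical point of the restriction $g(t):=F(tu)$ (with $u$ the unit direction of the ray) is a genuine planar equilibrium. Using the form \eqref{eq:grad}, one has $g'(t)=t-\tfrac{m_0}{t^{2}}-\sum_{k}m\,\tfrac{t-\langle u,r\omega^{k}\rangle}{|tu-r\omega^{k}|^{3}}$. The goal is to produce at least $2$ equilibria on each vertex ray and at least $3$ on each edge ray; since distinct rays meet only at the excluded origin, this gives $2(n-1)+3(n-1)=5n-5$ distinct equilibria.

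On a vertex ray the function $g$ blows up both at the central mass ($t=0$) and at the ring mass ($t=r$), so on each of $(0,r)$ and $(r,\infty)$ it descends from $+\infty$ and returns to $+\infty$, giving a local minimum in each interval; this accounts for the two equilibria and uses only $m_0>0$. The edge rays are the crux: here $g$ is singular only at $t=0$, so $g\to+\infty$ at both ends of the single interval $(0,\infty)$ forces just one critical point in general, and I must instead show that $g'$ realizes the sign pattern $-,+,-,+$, which forces three critical points (inner minimum, maximum, outer minimum). I would verify the four signs at explicit scales: for $t$ of order $m_0^{1/3}$ the term $-m_0/t^{2}$ dominates and $g'<0$; just beyond the inner minimum $g'>0$, because for $n-1\ge 3$ the origin is a nondegenerate local minimum of the \emph{pure} ring, so its restricted derivative is positive for small $t>0$ and still dominates the now-negligible $-m_0/t^2$ term; near $t\approx r$ the two adjacent masses drive $g'$ negative; and for large $t$ one has $g'\approx t>0$. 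The minimality of the origin follows since the Hessian $H$ of $F$ there commutes with the rotation by $2\pi/(n-1)$, which for $n-1\ge 3$ is not $\pm I$, forcing $H$ to be scalar, with $\operatorname{tr} H=\Delta F(0)=2+(n-1)m r^{-3}>0$.

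The hypothesis that the central mass is \emph{sufficiently small} enters in two ways that I would make precise. First, the term $m_0/|z|$ creates the origin singularity that splits the single central minimum of the pure ring into an inner minimum on each edge ray (and, with the ring mass, the extra minimum on each vertex ray), upgrading the pure-ring total of roughly $3(n-1)+1$ equilibria up to $5n-5$. Second, taking $m_0$ small guarantees by continuity that the outer maximum–minimum structure already present when $m_0=0$ persists, keeping the inner and outer critical points distinct and leaving the sign analysis of $g'$ away from the origin intact. As an independent cross-check I would invoke subharmonicity: since $\Delta F=2+\sum_i m_i|z-z_i|^{-3}>0$, the function $F$ has no local maxima, so every nondegenerate equilibrium has Morse index $0$ or $1$; combined with the Poincaré–Hopf relation $N_0-N_1=\chi(\mathcal{D})=1-n$ on the domain $\mathcal{D}$ of Theorem \ref{thm:LB} (on which $\nabla F$ is outward-pointing), the total is $2N_0+(n-1)$, and the $2(n-1)$ inner and outer edge-minima yield exactly $5n-5$.

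The main obstacle is the edge-ray estimate for general $n$: showing that $g'$ genuinely changes sign near the ring radius, i.e.\ that the outer minimum exists, uniformly in the number of masses. This is transparent when $n-1$ is large, since adjacent masses become arbitrarily close and drive $g'$ sharply negative at $t\approx r$, and it can be checked directly at the ring radius for each small value; the difficulty is a clean uniform bound on the sum $\sum_k m\,(t-\langle u,r\omega^{k}\rangle)/|tu-r\omega^{k}|^{3}$ in the edge direction. A secondary obstacle is that the reduction degenerates for $n-1\in\{1,2\}$: when $n-1=2$ the origin is a saddle rather than a minimum and four of the ten equilibria lie \emph{off} the symmetry axes, so the cases $n=2,3$ must be handled separately, either by direct analysis of the collinear configuration or by locating the off-axis $D_{n-1}$-orbit explicitly.
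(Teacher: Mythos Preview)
Your approach matches the paper's: ring configuration, $D_{n-1}$ reduction to vertex and edge rays, the target count of $2$ on each vertex ray and $3$ on each edge ray via a sign analysis of the restricted gradient, and separate treatment of $n=2,3$ (the paper cites the five Lagrange points and the equilateral restricted four-body problem there).

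Two points of comparison are worth recording. First, where you argue that the restricted derivative is positive just outside the origin by observing that the Hessian commutes with a rotation of order $\ge 3$ and is therefore a positive scalar, the paper instead computes directly that $g'(0)=\sum_{k}\cos^{2}\frac{(2k+1)\pi}{n-1}>0$ using a product-to-difference trig identity; your argument is tidier. Second, the obstacle you flag as the main one---negativity of the restricted gradient near the ring radius---is resolved in the paper more simply than you anticipate. The paper normalizes the central mass to $1$ and the ring radius to $1$ and takes the \emph{peripheral} mass $m$ large; then at $t=1$ the centrifugal and central-mass contributions cancel exactly ($1-1/1^{2}=0$), and what remains is $m\sum_{k}\dfrac{\cos\theta_{k}-1}{(2-2\cos\theta_{k})^{3/2}}$, which is termwise negative for every $n$. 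No uniform estimate in $n$ is needed. In your ``small $m_{0}$'' normalization this cancellation does not occur, and you would still owe a condition like $mC_{n}>r^{3}$ to force $g'(r)<0$; passing to the paper's scaling (equivalently, fixing $m_{0}=1$ and sending $m\to\infty$) removes that residual hypothesis and closes the gap you identified.
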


The proof of the theorem will analyze the so-called ``ring configuration'' from the study of the circular-restricted $(n+1)$-body problem \cite{Kalvouridis1998}.
The ring configuration was first introduced by Maxwell while studying the stability of Saturn's rings \cite{CroustKal2011}---for Maxwell the mass of each peripheral primary is taken to be small while the mass of the central primary is large, whereas we will need to take the mass of the peripheral primaries to be relatively large in order to produce the $5n-5$ equilibria.

In the proof of Theorem \ref{thm:5n-5} will need the following elementary lemma.

\begin{lemma}\label{lemma:trig}
For $n > 2$ an integer, $\displaystyle \sum_{k = 0}^{n-1} \cos{\left( \frac{2(2k+1) \pi}{n}\right)} = 0$.
\end{lemma}
\begin{proof}[Proof of Lemma \ref{lemma:trig}]
The left hand side can be reduced to a telescoping sum using the trigonometric product-to-difference identity as follows.
\begin{align*}
    \sum_{k = 0}^{n-1} \cos \left(\frac{2(2k+1) \pi}{n} \right) & = \frac{1}{\sin(2\pi/n)} \sum_{k = 0}^{n-1} \cos \left(\frac{2(2k+1) \pi}{n} \right) \sin (2\pi/n) \\
& = \frac{1}{\sin(2\pi/n)} \sum_{k = 0}^{n-1} \left[ \sin \left(\frac{4(k+1) \pi}{n} \right) -  \sin \left(\frac{4k\pi}{n} \right) \right] \\
& = \frac{1}{\sin(2\pi/n)} \left( \sin \left(4 \pi\right) -  \sin 0 \right) = 0.
\end{align*}
\end{proof}

\begin{proof}[Proof of Theorem \ref{thm:5n-5}]
We first address the cases $n=2$ and $n=3$ by referring to known examples.
For $n=2$, we note that $N(2) \geq 5$ follows from the classical result that the circular restricted three-body problem admits five equilibria (the famous five Lagrange points).
For $n=3$, we recall that the circular restricted four-body problem, with three primary masses at the vertices of an equilateral triangle, admits as many as ten equilibria, showing $N(3) \geq 10$ \cite{BaLe}.

For $n>3$ we consider the ring configuration \cite{Kalvouridis1998} in the circular-restricted $(n+1)$-body problem with $n$ primaries consisting of one primary at the origin with unit mass, and $n-1$ primaries located at (for $k=0,1,2,...,n-2$) the points  $\left(\cos{\frac{2k\pi}{n-1}},\sin{\frac{2k\pi}{n-1}}\right)$ with a common mass of $m$.
We will show that this configuration admits (at least) $5n-5$ equilibria, which agrees with numerical evidence from \cite{Kalvouridis1998}.

This configuration has $(n-1)$-fold symmetry.  We will restrict attention to two familes of rays each going from the origin to infinity: $n-1$ \emph{rays of type A} with angle $\frac{2k\pi}{n-1}$ for $k=0,1,2,...,n-2$ and $n-1$ \emph{rays of type B} with angle $\frac{(2k+1)\pi}{n}$ for $k=0,1,2,...,n-2$.

\medskip

\noindent{\bf Claim 1.} There are two equilibria on each of the rays of type A.

\medskip

\noindent{\bf Claim 2.}  For $m$ sufficiently large, there are three equilibria on each of the rays of type B. 

\medskip

\begin{proof}[Proof of Claim 1]
In view of the symmetry in the problem, it is suffices to verify the statement for one of the rays of type A, say the positive $x$-axis.
The vertical component of force vanishes along the $x$-axis, i.e., $F_y(x,0)=0$.
To elaborate, the point masses located off the $x$-axis can be paired with one another by reflection over the $x$-axis.
From geometric considerations, the vertical components of force cancel within each pair.

Hence, to locate equilibria on the positive $x$-axis, it is enough to find solutions of $F_x(x,0)=0$.  Upon inspection of $F_x(x,0)$, we notice the following limiting behavior
$$\lim_{x\to 0^+} F_x=-\infty, \quad \lim_{x\to 1^-} F_x=\infty, \quad \lim_{x\to 1^+}F_x=-\infty, \quad \text{and } \lim_{x\to \infty}F_x=\infty.$$
Thus, there is a sign change in each of the intervals $(0,1)$ and $(1,\infty)$, so by the intermediate value theorem there is a point at which $F_x=0$ in each of those two intervals. 
This shows that there are two equilibria on each of the rays of type A, as stated in the claim.
\end{proof}

\begin{proof}[Proof of Claim 2]
Considering the symmetry in the problem, it suffices to verify the statement for one ray of type B.
By a rotation of the plane, we move this ray to the positive $x$-axis, and the point masses are now positioned at 
$\left(\cos{\frac{(2k+1)\pi}{n-1}},\sin{\frac{(2k+1)\pi}{n-1}}\right)$.
As before, the vertical component of force vanishes on the $x$-axis, i.e., $F_y(x,0)=0$.
Thus, to locate equilibria, it is enough to find solutions of the equation $F_x(x,0)=0$.
We have, for $x>0$,
\be
F_x(x,0) = x - \frac{1}{x^2} + m \cdot g(x),
\ee
where 
$$g(x)=\sum_{k = 0}^{n-2} \frac{\cos{\frac{(2k+1) \pi}{n-1}}-x}{[x^2-2x\cos{\frac{(2k+1) \pi}{n-1}}+1]^{3/2}}.$$

Differentiating $g(x)$ with respect to $x$, we find
$$g'(x)=\sum_{k=0}^{n-2}-\frac{3(2x-2\cos{\frac{(2k+1)\pi}{n-1}})(-x+\cos{\frac{(2k+1)\pi}{n-1}})}{2(-2x\cos{{\frac{(2k+1)\pi}{n-1}}{}}+x^2+1)^{5/2}}-\frac{1}{(-2x\cos{\frac{(2k+1)\pi}{n-1}}+x^2+1)^{3/2}}.$$
Evaluating this at $x=0$, we obtain
\begin{align*}
    g'(0) &= \sum_{k =0}^{n-2} -1+3 \cos^2 \frac{(2k+1) \pi}{n-1} \\
    &= \sum_{k =0}^{n-2} -\sin^2{\frac{(2k+1) \pi}{n-1}}+2\cos^2{\frac{(2k+1) \pi}{n-1}} \\
    &= \sum_{k =0}^{n-2} \cos{\frac{2(2k+1) \pi}{n-1}}+\cos^2{\frac{(2k+1) \pi}{n-1}} \\
    &= \sum_{k =0}^{n-2} \cos^2{\frac{(2k+1) \pi}{n-1}},
\end{align*}
where we have used Lemma \ref{lemma:trig} in the last step. From this we see that $g'(0)>0$, and this implies that for $\delta>0$ sufficiently small we have $g(\delta) > 0$.  Then let $m>0$ be large enough that $m g(\delta) >|\delta-\frac{1}{\delta^2}|$. Since $F_x(x,0) = x - \frac{1}{x^2} + m g(x)$, this implies $F_x(\delta,0)>0$.

On the other hand, we have $g(1) < 0$, since the summand satisfies, for every $k$, 
$$\frac{\cos{(\frac{(2k+1) \pi}{n-1})}-1}{[2-2\cos{(\frac{(2k+1) \pi}{n-1})}]^{3/2}} < 0.$$  This implies $F_x(1,0) = 1 - 1 + g(1) < 0$.

Then, by the intermediate value theorem, there is an equilibrium point on the $x$-axis between $x=\delta$ and $x=1$.
Moreover, we have the following limiting behavior 
$$\lim_{x\to 0^+} F_x(x,0)=-\infty, \quad \lim_{x\to \infty} F_x(x,0)=\infty. $$ So there is an equilibrium point in the interval $0< x < \delta$ as well as an equilibrium point in the interval $x>1$.  Hence, there are three points of equilibrium on the $x$-axis.  This proves the claim.
\end{proof}

As we have two points of equilibrium on $n-1$ rays by Claim 1 and, assuming $m$ is sufficiently large, three points of equilibrium on $n-1$ rays by Claim 2, this shows that there are a total of at least $5n-5$ points of equilibrium. This proves the lower bound $N(n) \geq 5n-5$  stated in the theorem.
\end{proof}

\begin{figure}[t]
\centering
\includegraphics[width=0.47\linewidth]{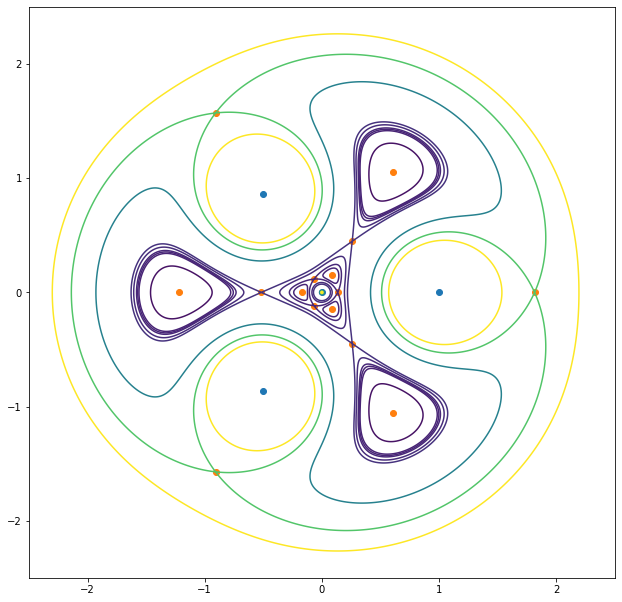}
\includegraphics[width=0.47\linewidth]{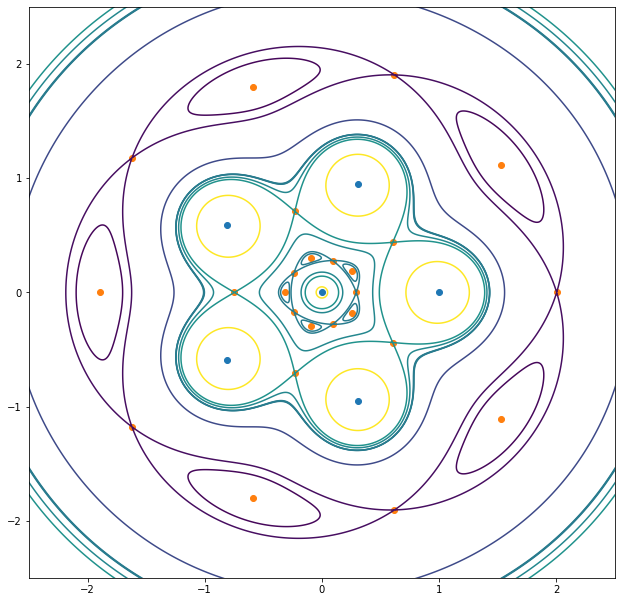}
\caption{Level curves of the potential for the ring configuration with $n=3+1$ point masses (left) and with $n=5+1$ point masses (right). In the $n=3+1$ case, the peripheral masses each have mass $1$ and the small central mass has mass $1/100$.  In the $n=5+1$ case, the peripheral masses each have mass $7/6$ and the small central mass has mass $7/60$.}
\label{fig:Ring4}
\end{figure}


\section{Concluding Remarks}

Our results show that $5n-5 \leq N(n) \leq 4^{n+2}$.  We believe the upper bound is far from sharp, and we conjecture that $N(n)$ grows linearly.
In light of the result of Khavinson and Neumann \cite{KhNe} and Remark \ref{rmk:ring}, the following question concerning Problem \ref{prob:main} seems natural

\begin{q}\label{q:main}
Does the maximum number of equilibria for Problem \ref{prob:main} satisfy 
$\displaystyle N(n)=5n-5$?
\end{q}

Since the examples used in the proof of Theorem \ref{thm:5n-5} consisted of central configurations, a positive answer to Question \ref{q:main} would be interesting, as it would mean that the maximum number $N(n)$ is attained within the small subset of parameter space that is relevant to the circular-restricted $(n+1)$-body problem.

We have defined $N(n)$ as the maximum number of equilibria while taking the maximum over the generic set of parameters for which the number of equilibria is finite.  It seems likely that the number of equilibria is finite in general for $n \geq 2$, but as we have mentioned above it is an open problem to show this.

The problem of establishing general finiteness results in Maxwell's problem is also open and described as ``very irritating'' by B. Shapiro \cite{Sh}
(it is open even in the case that the charges have a common sign).

Finiteness in the gravitational lensing problem can be shown using complex variables and a rigidity theorem for the so-called Schwarz function of an analytic curve \cite[Introduction]{KhNe}: finiteness of the number of equilibria holds except for $n=1$ and $z_1=w$ (where a circle of equilibria produces an ``Einstein ring'').

Concerning the proof of Theorem \ref{thm:UB}, in the study of Problem \ref{prob:Max}, the planar restriction of Maxwell's problem, an alternative method has been used in \cite{Killian} for approaching a system very similar to (\ref{eq:main}) without introducing additional variables.  The method involves repeatedly squaring, rearranging, and clearing denominators in order to remove the square roots in each of the expressions
$ |z-z_i| = \sqrt{(x-x_i)^2 + (y-y_i)^2}$.
After this process results in a polynomial system, the classical bivariate Bezout's theorem can be applied.
However, applying this method to Problem \ref{prob:main} gives the upper bound $4^n(3n+1)^2$ which has an additional quadratic factor as compared to the upper bound we obtained in Theorem \ref{thm:UB}.

In the opposing direction, we may apply the method of proof of Theorem \ref{thm:UB} to Problem \ref{prob:Max}.
With very minor adaptations to the proof of Theorem \ref{thm:UB}, the result is that the number of equilibria for the planar restriction of Maxwell's problem is, in the generic case, at most $4^{n+2}$. When $n \geq 4$, this gives an improvement on the upper bound $4^{n-2}(3n-2)^2$ obtained in \cite{Killian} from using the method of repeated squaring described above.

\noindent {\bf Data availability statement.}
The data that support this study are available upon request.

\noindent {\bf Acknowledgements.}
The third author acknowledges support from the Simons Foundation under the grant 712397.

\bibliographystyle{alpha}
\bibliography{prob}

\vspace{0.08in}

{\em

Nickolas Arustamyan\\
email: narustamyan2017@fau.edu
}

\vspace{0.08in}

{\em

Christopher Cox\\
email: ccox2017@fau.edu
}

\vspace{0.08in}

{\em
Erik Lundberg\\
email: elundber@fau.edu
}

\vspace{0.08in}

{\em

Sean Perry\\
email: sperry9@fau.edu
}

\vspace{0.08in}

{\em

Zvi Rosen\\
email: rosenz@fau.edu
}

\end{document}